\documentclass[11pt]{article}
\usepackage[margin=1in]{geometry}
\usepackage{amsmath,amssymb}
\usepackage{booktabs}
\usepackage{algorithmic}
\usepackage{hyperref}
\usepackage{enumitem}
\usepackage{float}
\usepackage{graphicx}
\usepackage{subcaption}
\usepackage{amsthm}
\usepackage{hyperref}
\usepackage{tikz}
\usetikzlibrary{shapes.geometric}

\definecolor{fiberblue}{RGB}{70,110,190}
\definecolor{fiberred}{RGB}{190,70,70}

\newcommand{\func}[1]{\textsc{#1}}
\newtheorem{lemma}{Lemma}[section]

\title{Fiber-Navigable Search: A Geometric Approach to Filtered ANN}
\author{Thuong Dang\thanks{Contact: dangt@uni-duesseldorf.de. Open to collaboration on benchmark implementation and large-scale evaluation.}}
\date{March 2026}

\begin{document}
\maketitle

\begin{abstract}
We present a geometric framework for filtered approximate nearest neighbor (ANN) search. Filtering a proximity graph by a metadata predicate produces a subgraph---a \emph{fiber}---whose connectivity and geometry can differ sharply from the full graph. Using local signals, we propose a two-phase search algorithm that combines full-graph exploration with filtered-neighbor descent when the local geometry is favorable. These signals also classify search failures into three regimes: topological cuts, geometric folds, and genuine basins. A key observation is that all three share a common resolution: restarting the search in a fiber-present cluster near the query. To support this, we introduce a lightweight anchor structure that identifies such regions and restarts the search accordingly. We show empirically that the method outperforms FAISS HNSW on filtered search and the three failure regimes separate cleanly and shift predictably with filter selectivity.
\end{abstract}

\tableofcontents

\section{Introduction}\label{sec:intro}

Nearest neighbor search in high-dimensional spaces is central to modern
retrieval systems, including product recommendation, semantic search,
and retrieval-augmented generation (RAG). As datasets grow to millions
or billions of vectors, exact search becomes too expensive. Approximate
nearest neighbor (ANN) algorithms are therefore essential for large-scale
vector retrieval.

Among ANN approaches, graph-based methods dominate due to their strong
recall--latency tradeoffs. These methods build a proximity graph over
the dataset and answer queries by greedily walking through the graph.
The two most influential methods are HNSW~\cite{malkov2020hnsw} and
Vamana (the graph construction algorithm behind
DiskANN~\cite{subramanya2019diskann}). HNSW builds a hierarchical
small-world graph with logarithmic search time. Vamana builds a flat
graph with bounded out-degree and is designed for disk-resident datasets.
Both achieve high recall on standard benchmarks and are widely used in
production.

In practice, however, queries rarely come without constraints.
A user searching for similar products may require a specific category,
price range, or brand. A document retrieval system may restrict results
to a particular language or date range. This setting is known as
\emph{filtered approximate nearest neighbor search}: given a query vector
$q$ and a predicate $S$ on metadata, find the approximate nearest
neighbors of $q$ among only points whose metadata satisfies $S$.

The core difficulty is structural: filtering removes nodes and edges
from the graph. A graph that is easy to navigate in full may become
hard to navigate once filtered. Edges needed for greedy descent may
vanish, leaving disconnected regions and local minima that trap the
search.

Existing systems use one of three strategies.
\emph{Post-filtering} retrieves the top-$N$ unfiltered results and
discards non-matching points---this fails when the true filtered
neighbors are far from the unfiltered ones.
\emph{Pre-filtering} restricts search to matching points, but the
resulting subgraph can be disconnected, trapping the search at local
minima.
\emph{Traversal-time filtering} walks the full graph but only collects
matching results; under selective filters, the walk converges in a
region shaped by the full graph, which may be far from the true
filtered neighbors.

All three strategies treat filtering as an operational constraint---they
decide \emph{when} to apply the predicate but rely on standard greedy
search for navigation. None accounts for how filtering changes the
geometry of the search space. As a result, algorithm design is largely
heuristic, and it remains unclear why certain queries fail or how
failures relate to filter selectivity.

In this paper, we propose a geometric approach. We view the filtered subset
of the graph as a \emph{fiber} over the metadata predicate and introduce
two local signals---\emph{fiber density} (the fraction of neighbors that pass the filter), and \emph{drift} (the average tendency of the filtered neighborhood toward or away from the query)---that describe its geometry during traversal. These signals drive a search algorithm that switches between \emph{fiber descent} (greedy search restricted to filtered neighbors) and full-graph exploration. They also classify search failures into three regimes:
\emph{topological cuts} (the fiber is locally disconnected),
\emph{geometric folds} (the fiber is present but slopes away from the query), and \emph{genuine basins} (true local minima on the fiber).
A key finding is that all three regimes share a common fix: restarting the search in a fiber-present region near the query. This leads to a simple algorithm that achieves strong filtered recall without building filter-aware graph construction.

\subsection{Contributions}
 
This paper makes the following contributions:
 
\begin{enumerate}
 
\item \textbf{A geometric framework and failure taxonomy for filtered
graph search.}
We formalize filtered ANN search using a fibered view of the dataset
and introduce two local signals---\emph{fiber density}, and \emph{drift}---that describe the filtered region's geometry
during traversal. These signals classify failures at \emph{stall points}
(nodes where the walk stops without finding enough results) into three
regimes: \emph{topological cuts}, \emph{geometric folds}, and
\emph{genuine basins}. We introduce the \emph{boundary-improving set}
to diagnose what happens outside the fiber at stall points, and show
empirically that the three regimes separate cleanly on every diagnostic
axis and shift predictably with filter selectivity.
 
\item \textbf{A drift-guided two-phase search algorithm.}
The drift signal drives a two-phase traversal: the algorithm begins
with fiber descent (greedy search restricted to filtered neighbors)
when drift is negative, and falls back to full-graph beam exploration
when drift turns non-negative. Phase switching is dynamic---the
algorithm re-enters fiber descent whenever the drift becomes favorable
again. This separation between fiber navigation and full-graph
navigation avoids the need for filter-aware index construction.
 
\item \textbf{An anchor-based restart mechanism.}
When a walk stalls, all three failure regimes share a common fix:
restarting in a fiber-present region near the query. To support this,
we introduce a lightweight \emph{anchor atlas}---a clustering-based
structure that identifies fiber-present regions near the query in
$O(|S|)$ time using per-cluster metadata statistics and an inverted
cluster index.
 
\item \textbf{Experimental validation.}
On a real-world dataset of 105{,}100 vectors with 24 metadata fields,
our method outperforms FAISS HNSW under both post-filtering and
traversal-time filtering, with near-zero failure rate across
10{,}000 test queries spanning filter selectivities from
${<}0.01\%$ to ${\sim}21\%$.
 
\end{enumerate}

\paragraph{Paper outline.}
Section~\ref{sec:related_work} reviews prior work on filtered approximate nearest neighbor search.
Section~\ref{sec:framework} introduces the geometric framework and the local signals used to characterize filtered search.
Section~\ref{sec:construction} describes index construction and the anchor atlas.
Section~\ref{sec:algorithms} presents the search algorithms.
Sections~\ref{sec:experiments} and~\ref{sec:results} describe the experimental setup and report the results.
Section~\ref{sec:analysis} analyzes stall regimes empirically.
Section~\ref{sec:conclusion} concludes and discusses future directions.

\section{Related Work}\label{sec:related_work}

Filtered approximate nearest neighbor search (FANNS) has attracted significant attention in recent years. Existing approaches can be broadly categorized based on how they integrate filtering with the underlying vector index.

\textbf{Filter-aware index construction.}
Several systems incorporate filter information directly into the index structure.
Filtered-DiskANN~\cite{gollapudi2023filtereddiskann} introduces two algorithms that modify the Vamana graph construction to account for label constraints, improving navigability of filter-induced subgraphs.
UNG~\cite{cai2024ung} further integrates label-set relationships with vector proximity through a unified navigating graph.
Earlier hybrid query systems such as HQANN~\cite{wu2022hqann} and NHQ~\cite{wang2023nhq} combine vector similarity with attribute information during index construction.

\textbf{Predicate-agnostic graph methods.}
Instead of modifying the graph for each filter, predicate-agnostic methods attempt to maintain robustness under filtering.
ACORN~\cite{patel2024acorn} expands neighbor lists during HNSW construction to preserve connectivity when nodes are removed by filters.

\textbf{Index-agnostic search strategies.}
Other approaches leave the index unchanged and adapt the search procedure.
RWalks~\cite{aitoamar2025rwalks} diffuses attribute information across the graph via random walks, enabling efficient filtered search without modifying the index structure.
AIRSHIP~\cite{zhao2022airship} performs constrained search on proximity graphs by integrating UDF filtering directly into the search procedure, and proposes three optimizations: starting point selection, multi-direction search, and biased priority queue selection, all without modifying the underlying graph. Our approach shares this search-time perspective but differs in that it uses explicit local signals to guide traversal and a structured, metadata-aware restart mechanism.

\textbf{Range-filtered ANN.}
Some work focuses specifically on range predicates.
SeRF~\cite{zuo2024serf} builds segment graphs for range filtering, while WoW~\cite{wang2025wow} introduces incremental window-based indexing.

\textbf{Benchmarks and surveys.}
Recent studies emphasize the need for systematic evaluation of filtered ANN methods.
Iff et al.~\cite{iff2025benchmark} provide a large-scale benchmark of filtered vector search algorithms, and Lin et al.~\cite{lin2025survey} present a comprehensive survey of FANNS techniques.

\section{Geometric Framework}\label{sec:framework}

We note that the geometric perspective is not required to implement the algorithm; all quantities reduce to simple graph statistics. However, it provides a useful lens for understanding why filtered search fails and directly motivates the design of the algorithmic components.

\subsection{Problem Setting}\label{sec:setting}

\paragraph{Data.}
Let $X = \{x_1,\dots,x_n\} \subset \mathbb{R}^d$. Each point carries
metadata across $F$ fields, with field $f$ taking values in a discrete
set $V_f$. The metadata map $m: X \to \prod_f V_f$ sends each point to
its metadata tuple. A \emph{filter predicate} specifies, for a subset of
fields, a set of allowed values per field:
$S = \{f_{i_1} \in A_1,\dots, f_{i_s} \in A_s\}$ where
$A_j \subseteq V_{f_{i_j}}$. A point satisfies $S$ when
$f_{i_j}(x) \in A_j$ for all $j = 1,\dots,s$; unconstrained fields are unrestricted. 

\paragraph{Proximity graph.}
A global graph $G = (X, E)$ (e.g.\ HNSW, Vamana, $k$NN-graph) provides
an adjacency list $N(x)$ for every point.

\subsection{Fibers over Metadata}

Let $m : X \to \mathcal{M}$ map the dataset into the product metadata space as described in Section~\ref{sec:setting}. For a filter predicate $S$, the \emph{fiber} is
\[
X_S = \{ x \in X \mid x \text{ satisfies } S \}.
\]

The \emph{fiber subgraph} $G_S = (X_S,\, E \cap (X_S \times X_S))$ is the subgraph of $G$ induced by $X_S$. Standard greedy or beam search on $G$ performs discrete descent on the potential $V(x) = d(q,x)$. However, navigability of $G$ does not imply navigability of $G_S$: the fiber subgraph may lose edges critical for descent toward $q$.

We model the dataset as a discrete fibered space:
\begin{itemize}[nosep]
  \item The \emph{base space} is the metadata space $\mathcal{M}$.
  \item The \emph{fiber} above each metadata value $u \in \mathcal{M}$ is the set of
        points $\{x : m(x) = u\}$ together with their embedding coordinates.
  \item The filter $S$ selects a union of such fibers.
\end{itemize}

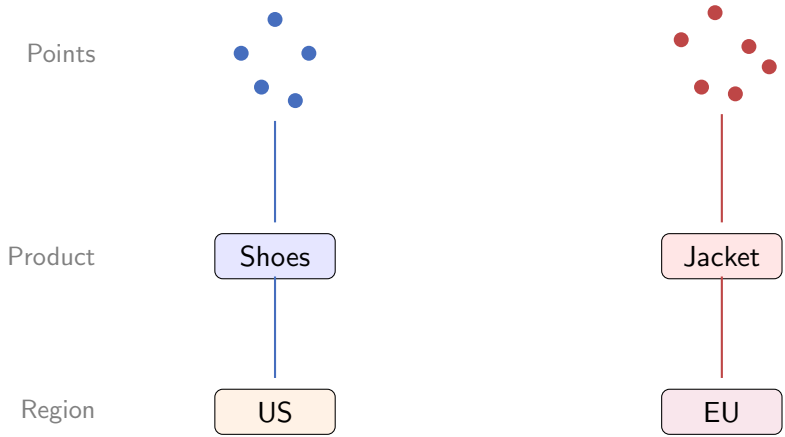
\begin{figure}[ht]
\centering
\begin{tikzpicture}[scale=0.9]


  \node[circle, inner sep=2pt, fill=fiberblue] (p1a) at (2.5, 10.0) {};
  \node[circle, inner sep=2pt, fill=fiberblue] (p1b) at (3.0, 10.5) {};
  \node[circle, inner sep=2pt, fill=fiberblue] (p1c) at (3.5, 10.0) {};
  \node[circle, inner sep=2pt, fill=fiberblue] (p1d) at (2.8, 9.5) {};
  \node[circle, inner sep=2pt, fill=fiberblue] (p1e) at (3.3, 9.3) {};

  \draw[fiberblue, thick] (3.0, 9.0) -- (3.0, 7.5);

  \node[draw, rounded corners=3pt, fill=blue!10,
        minimum width=1.6cm, minimum height=0.6cm,
        font=\sffamily] (A) at (3.0, 7.0) {Shoes};

  \draw[fiberblue, thick] (3.0, 6.7) -- (3.0, 5.2);

  \node[draw, rounded corners=3pt, fill=orange!10,
        minimum width=1.6cm, minimum height=0.6cm,
        font=\sffamily] (X) at (3.0, 4.7) {US};


  \node[circle, inner sep=2pt, fill=fiberred] (p2a) at (9.0, 10.2) {};
  \node[circle, inner sep=2pt, fill=fiberred] (p2b) at (9.5, 10.6) {};
  \node[circle, inner sep=2pt, fill=fiberred] (p2c) at (10.0, 10.1) {};
  \node[circle, inner sep=2pt, fill=fiberred] (p2d) at (9.3, 9.5) {};
  \node[circle, inner sep=2pt, fill=fiberred] (p2e) at (9.8, 9.4) {};
  \node[circle, inner sep=2pt, fill=fiberred] (p2f) at (10.3, 9.8) {};

  \draw[fiberred, thick] (9.6, 9.1) -- (9.6, 7.5);

  \node[draw, rounded corners=3pt, fill=red!10,
        minimum width=1.6cm, minimum height=0.6cm,
        font=\sffamily] (B) at (9.6, 7.0) {Jacket};

  \draw[fiberred, thick] (9.6, 6.7) -- (9.6, 5.2);

  \node[draw, rounded corners=3pt, fill=purple!10,
        minimum width=1.6cm, minimum height=0.6cm,
        font=\sffamily] (Y) at (9.6, 4.7) {EU};

  \node[font=\small\sffamily, gray, anchor=east]
    at (0.5, 10.0) {Points};
  \node[font=\small\sffamily, gray, anchor=east]
    at (0.5, 7.0) {Product};
  \node[font=\small\sffamily, gray, anchor=east]
    at (0.5, 4.7) {Region};

\end{tikzpicture}
\caption{%
  Fibers over metadata.  %
}
\label{fig:fibered-space}
\end{figure}

In the figure above, points in the embedding space (top) project onto product values and then onto region values. Each vertical thread is a fiber: the left fiber $m^{-1}(\text{``US''},\text{``Shoes''})$ consists of all points with product${=}$ Shoes and region${=}$ US; the right fiber $m^{-1}(\text{``EU''}, \text{``Jacket''})$ consists of all points with product${=}$ Jacket and region${=}$ EU.  A filter selects one or more fibers.

\subsection{Local Signals}\label{sec:signals}

Every graph search performs discrete gradient descent on the potential $V(x) = d(q, x)$. Under filter $S$, the \textit{filtered neighborhood} is $N_S(x) = \{y \in N(x) \mid y \text{ satisfies } S\} = N(x) \cap X_S$. We next define two local signals that characterize the local geometry of the fiber at a node $x$:

\paragraph{Fiber density.} $\rho_S(x) = |N_S(x)| / |N(x)|$. It is a topological signal measuring how much of the local connectivity survives filtering.
Low density indicates that the fiber is locally sparse or disconnected, while high density indicates that the fiber structure is intact and navigable.

\paragraph{Drift.}
When $|N_S(x)| > 0$, we define
\[
\text{drift}(x) = \frac{1}{|N_S(x)|} \sum_{y \in N_S(x)} \bigl(V(y) - V(x)\bigr).
\] Drift measures the \textit{average tendency of the neighborhood}—whether the region collectively slopes toward the query or not. It is the discrete Laplacian of $V$ restricted to the fiber. Negative drift indicates a wide descent valley (Figure \ref{fig:negative-drift}); non-negative drift might indicate a narrow ridge (Figure \ref{fig:positive-drift}).

\begin{figure}[H]
\centering
\begin{subfigure}[b]{0.48\textwidth}
    \centering
    \includegraphics[width=0.8\textwidth]{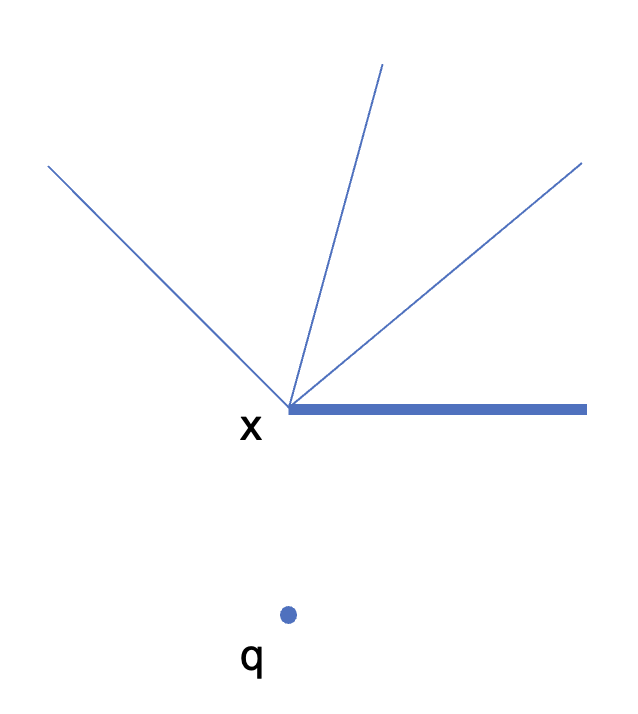}
    \caption{$\operatorname{drift}(x) > 0$: stall region}
    \label{fig:positive-drift}
\end{subfigure}
\hfill
\begin{subfigure}[b]{0.48\textwidth}
    \centering
    \includegraphics[width=0.8\textwidth]{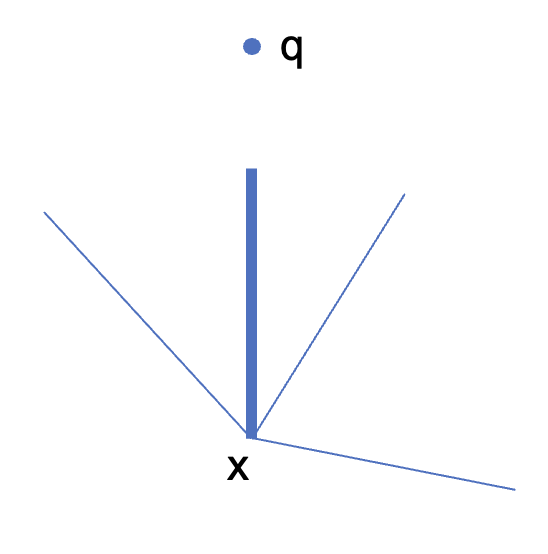}
    \caption{$\operatorname{drift}(x) < 0$: wide region for descent.}
    \label{fig:negative-drift}
\end{subfigure}
\caption{Drift and potential best descent direction (bold lines) as filtered neighborhood diagnostics.}
\label{fig:drift-signals}
\end{figure}


\subsection{Algorithm Overview}\label{sec:overview}

We outline the search algorithm at a high level so
that the reader has a concrete picture to anchor the subsequent analysis. 
The algorithm admits two instantiations, corresponding to different levels of engagement
with the fiber geometry:

\paragraph{Beam search (cluster-level navigation).}
The simpler variant uses only the atlas structure. The anchor atlas selects a fiber-present cluster near $q$, a standard beam search walks the full graph from the seed while passively
collecting filtered points, and when the walk converges the atlas transitions to the next cluster. This version navigates \emph{between} clusters but is geometry-unaware \emph{within} a cluster: it does not use drift during the walk.

\paragraph{Drift-guided search (cluster- and fiber-level navigation).}
The full variant adds intra-cluster geometric navigation via the drift signal. From each restart, the walk alternates between fiber descent (Phase~1) and full-graph beam search
(Phase~2), re-entering Phase~1 whenever the drift turns favorable. This version navigates both between and within clusters, exploiting the fiber structure at every expansion step.

\medskip
\noindent Both variants share the outer restart loop and the anchor atlas. The guided
variant achieves higher recall with a smaller beam by spending its expansion budget on
drift-identified productive regions rather than blind beam exploration.

Given a query $q$ and a metadata filter $S$, the algorithm proceeds as follows:
\begin{enumerate}
    \item \textbf{Cluster selection.} The anchor atlas---a set of $k$-means clusters augmented
    with per-cluster metadata statistics---identifies clusters that contain points matching $S$
    and lie close to $q$. Seed points are drawn from the top-scoring clusters.
    \item \textbf{Local walk.} From these seeds, a walk explores the full proximity graph $G$
    (which is always navigable), collecting any encountered points that match $S$. Two walk
    strategies are available:
    \begin{itemize}
        \item \emph{Beam search}: a standard beam search on $G$ that passively collects
        filtered points along the way.
        \item \emph{Drift-guided search}: a two-phase walk that descends on filtered neighbors when the drift signal is favorable (Phase~1) (Figure \ref{fig:local_descent}), queuing the top $K_f$ filtered neighbors that are closer to the query at each step, and falls back to full-graph beam search when it is not (Phase~2). The phases are not a one-way progression: when Phase~2 traversal encounters a region where drift turns negative and filtered neighbors reappear, the algorithm re-enters Phase~1 and resumes fiber descent.
    \end{itemize}
    \item \textbf{Restart.} If a walk terminates without collecting enough results, the atlas
    provides seeds from the next-best cluster and a new walk begins. A small jump budget
    (typically 3--4 restarts) suffices.
\end{enumerate}

\begin{figure}[H]
\centering
\includegraphics[width=0.35\textwidth]{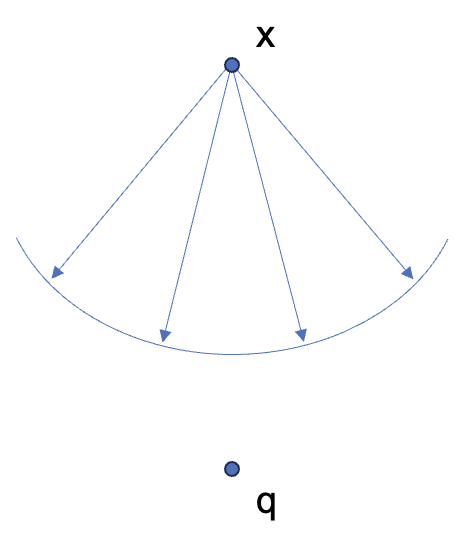}
\caption{Local descent.}
\label{fig:local_descent}
\end{figure}

\noindent The two local signals defined in Section~\ref{sec:signals}---fiber density, and drift---drive every decision in this loop: drift governs phase switching within a walk, fiber density characterizes stall points, and both inform the restart mechanism. The next subsection develop the geometric perspective that motivates these choices.

\subsection{Geometric Motivation}\label{sec:motivation}

The algorithm design is guided by an analogy with differential geometry (see e.g., \cite{lee2013smooth}).
A \textit{smooth manifold} is usually too complex to describe with a single coordinate system, so one covers it with an \emph{atlas} of overlapping \emph{charts}, each providing local coordinates in which calculus is tractable. When a chart’s coordinates break down—such as near the boundary of its domain—one transitions to an overlapping chart via a \emph{transition function} and continues (Figure \ref{fig:charts}). A \emph{fiber bundle} $\pi: E \to B$ adds a further layer: over each point $x \in B$, the fiber $\pi^{-1}(x)$ is diffeomorphic to a fixed space $F$, called the \emph{typical fiber} (Figure \ref{fig:fibers}). The total space $E$ is locally (but not necessarily globally) a product $B \times F$. Navigation on the bundle requires both moving along the base and keeping track of how the fibers are attached.

Our filtered search problem instantiates this picture on a discrete, finite structure.
Table~\ref{tab:correspondence} records the dictionary.

\begin{table}[h]
\centering
\caption{Correspondence between differential geometry and filtered graph search.}
\label{tab:correspondence}
\begin{tabular}{@{}lll@{}}
\toprule
\textbf{Differential geometry} & \textbf{Filtered graph search} & \textbf{Role} \\
\midrule
Chart & Cluster (anchor region) & Local region with tractable geometry \\
Atlas & Anchor atlas & Collection of charts covering the space \\
Transition function & Anchor restart & Map from one chart to another \\
Fiber over a point & $m^{-1}(u)$ for $u\in \mathcal{M}$ & Points with metadata value $u$ \\
Fiber connectivity & Fiber density & How much of the fiber is present locally \\
Laplacian of $V$ on fiber & $\mathrm{drift}(x)$ & Whether the region slopes toward the query \\
\bottomrule
\end{tabular}
\end{table}

\begin{figure}[H]
\centering
\includegraphics[width=0.8\textwidth]{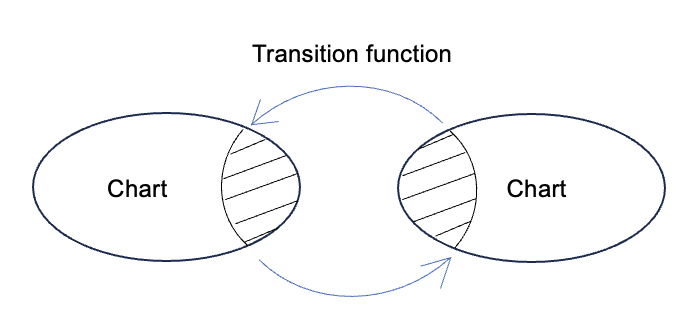}
\caption{Transition function to glue charts in differential geometry.}
\label{fig:charts}
\end{figure}

\begin{figure}[H]
\centering
\begin{subfigure}[b]{0.48\textwidth}
    \centering
    \includegraphics[width=\textwidth]{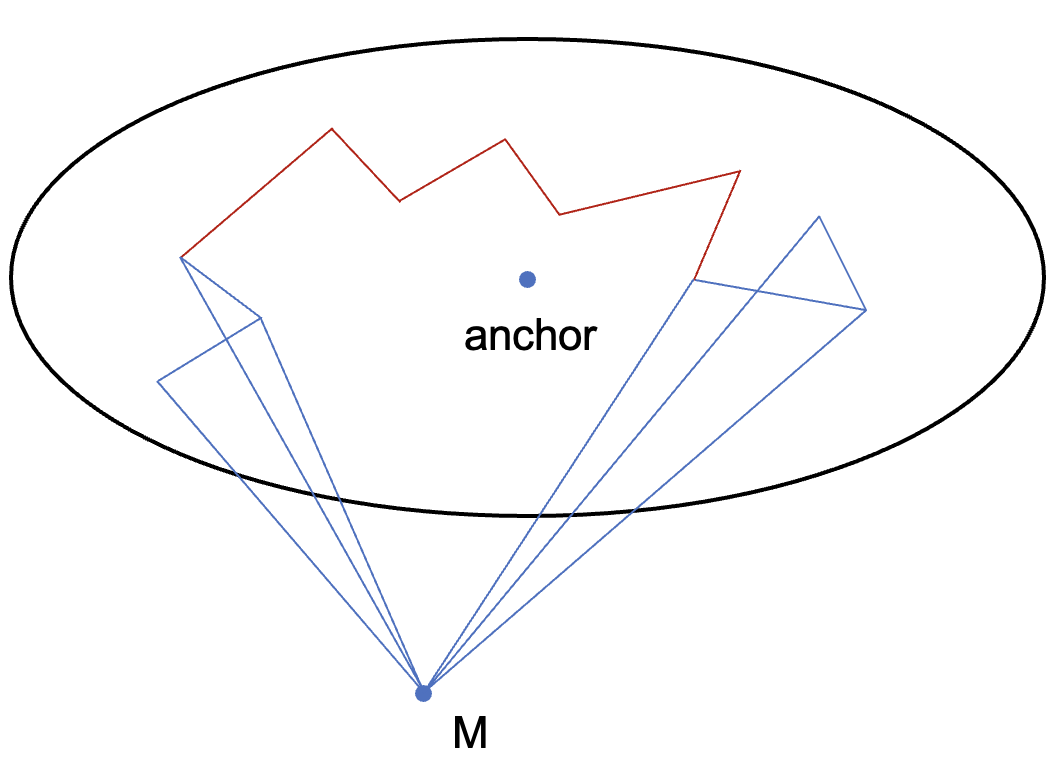}
    \caption{Fiber over metadata inside a cluster}
    \label{fig:discrete-fiber}
\end{subfigure}
\hfill
\begin{subfigure}[b]{0.48\textwidth}
    \centering
    \includegraphics[width=\textwidth]{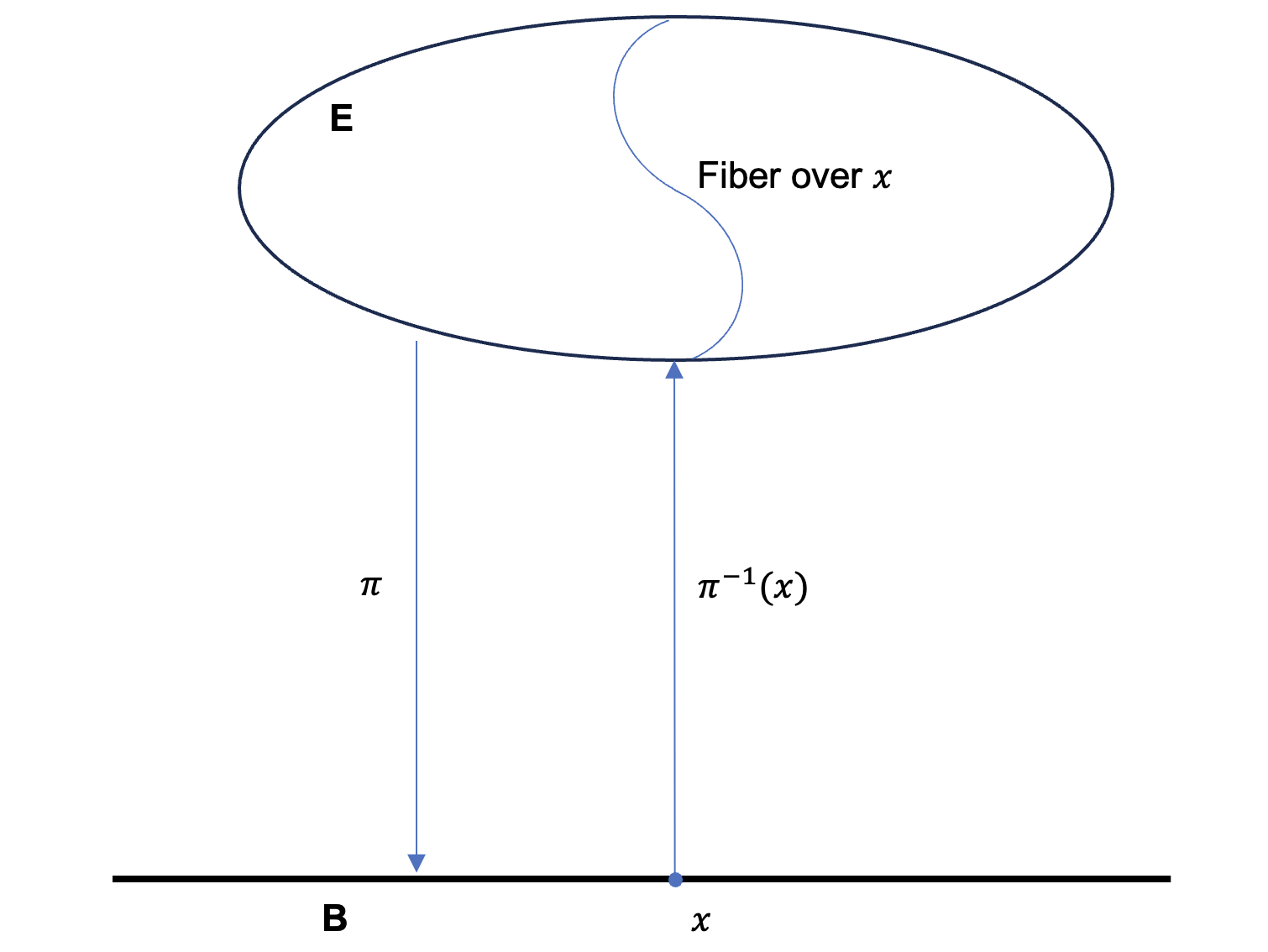}
    \caption{Fiber over a point $x$ in differential geometry}
    \label{fig:geometry-fiber}
\end{subfigure}
\caption{Discrete and continuous fiber structures.}
\label{fig:fibers}
\end{figure}

\paragraph{Fiber connectivity.}
In a smooth fiber bundle $\pi: E \to B$, every fiber is homeomorphic to the typical fiber $F$---so if $F$ is connected, all fibers are.
In our discrete setting, the situation is worse: filtering can
disconnect or entirely eliminate the fiber locally.

\paragraph{Drift as the discrete Laplacian.}
In $\mathbb{R}^n$, Laplacian of a function is the sum of its second derivatives. In general, it measures how different a point is from the average of its neighbors. Drift is the discrete analogue restricted to the fiber: negative drift means the filtered neighborhood is on average closer to the query, indicating a favorable region for descent.
\section{Index Construction}\label{sec:construction}
 
The index consists of a proximity graph and an anchor atlas. The search algorithm is graph-agnostic: it operates on any graph exposing a \func{GetNeighbors} interface. The anchor atlas requires no filter-aware modifications.
 
\subsection{Proximity Graph}
 
We introduce the \textbf{$\alpha$-$k$NN graph}: a symmetrized $k$-nearest-neighbor graph with selective $\alpha$-RNG (relative neighborhood graph) degree capping---a simplified version of DiskANN \cite{subramanya2019diskann}. Construction proceeds in three stages:
 
\medskip
\noindent\textbf{Algorithm 1: $\alpha$-$k$NN Graph Construction}
\begin{algorithmic}[1]
\REQUIRE Vectors $\{\hat{x}_1, \dots, \hat{x}_n\}$ (normalized), degree $k$, max degree $R_{\max}$, relaxation $\alpha > 1$
\ENSURE Adjacency lists $N(x)$ for all $x$
 
\medskip
\STATE \textbf{Stage 1: Directed $k$NN}
\STATE Compute $k$ nearest neighbors for each point using cosine similarity
\STATE $N(x) \leftarrow \text{kNN}(x)$ for all $x$ \COMMENT{directed, degree exactly $k$}
 
\medskip
\STATE \textbf{Stage 2: Symmetrization}
\FOR{each edge $(i, j) \in G$}
    \STATE Add reverse edge $(j, i)$ if not present
\ENDFOR
\COMMENT{Degree increases; hub nodes may have degree $\gg k$}
 
\medskip
\STATE \textbf{Stage 3: Selective $\alpha$-RNG Pruning}
\FOR{each node $i$ with $|N(i)| > R_{\max}$}
    \STATE Sort $N(i)$ by distance to $i$
    \STATE $\text{kept} \leftarrow \emptyset$
    \FOR{each $p \in N(i)$ in distance order}
        \IF{$\forall\, q \in \text{kept}:\ d(i,p) < \alpha \cdot d(q,p)$}
            \STATE $\text{kept} \leftarrow \text{kept} \cup \{p\}$
        \ENDIF
        \IF{$|\text{kept}| \geq R_{\max}$} \STATE \textbf{break} \ENDIF
    \ENDFOR
    \STATE $N(i) \leftarrow \text{kept}$
\ENDFOR
\COMMENT{Nodes with $|N(i)| \leq R_{\max}$ are unchanged}
\end{algorithmic}
 
\paragraph{Design rationale.} Symmetrization ensures navigability: if $A$ can reach $B$, then $B$ can reach $A$, enabling the walk to explore in both directions. However, symmetrization creates hub nodes with degree far exceeding $k$ (up to $500\times$ on a 100k dataset). The selective $\alpha$-RNG pruning caps only over-degree nodes, preserving local connectivity for typical nodes while eliminating pathological hubs. The $\alpha > 1$ relaxation retains directionally diverse edges rather than keeping only the nearest, providing coverage across different approach directions---beneficial for Phase~1 fiber descent, which needs filtered neighbors in the descent direction toward $q$.
 
\paragraph{Graph-agnostic search.} The guided search algorithm (Section \ref{sec:algorithms}) depends only on the \func{GetNeighbors} interface, not on the graph construction method. To validate this, we also evaluate on the base layer (level~0) of a FAISS HNSW index, extracted without modification. Using the same HNSW graph that FAISS searches with its own hierarchical algorithm, our guided search substantially outperforms both FAISS's post-filter and traversal-filter strategies. This confirms that the improvement derives from the search algorithm and anchor atlas, not from the choice of base graph.

\subsection{Anchor Atlas}\label{sec:anchor}

The anchor atlas is a lightweight clustering-based structure that supports efficient seed selection for filtered queries. It consists of $K$ clusters (obtained via $k$-means on the embedding vectors), augmented with per-cluster metadata statistics and an inverted index for fast cluster pruning.

\paragraph{Storage.}
For each cluster $c$ and each metadata field $f$, the atlas stores:
\begin{itemize}
    \item \texttt{members}$[c][f][v]$: the list of point indices in $c$ with value $v$ in field $f$.
\end{itemize}
Storage is $O(nF)$ (Lemma \ref{lem:storage}), where $F$ is the number of metadata fields.

\paragraph{Scope note.} For clarity, we describe the atlas operations for \textit{single-value} conjunctive filters $S = \{f_1 = v_1, \ldots, f_m = v_m\}$; the extension to set-valued predicates $f_i \in A_i$ replaces each membership check with a set lookup over $A_i$.

\paragraph{Inverted cluster index.}
To avoid scanning all $K$ clusters at query time, the atlas maintains an inverted index:
$$
\texttt{cluster\_index}[f][v] = \{C \mid \exists x\in C, f(x) = v \},
$$
mapping each (field, value) pair to the set of clusters containing at least one matching point. For filter $S$, the candidate clusters are obtained by intersecting $m$ posting lists:
$$
C_{\text{match}} = \bigcap_{i=1}^{m} \texttt{cluster\_index}[f_i][v_i].
$$
This costs $O(|S|)$ regardless of $K$. For selective filters, $|C_{\text{match}}| \ll K$: a rare value like ``Chem.\ cosmetics'' matches 3--5 clusters out of 324 in our experiment (Section~\ref{sec:experiments}), reducing subsequent scoring cost by two orders of magnitude.

\paragraph{Anchor selection.}
At query time, given query $q$ and filter $S$, the atlas: (1)~retrieves $C_{\text{match}}$ via the inverted index; (2)~discards already-processed clusters; (3)~ranks remaining clusters by similarity score between $q$ and cluster centroid; and (4)~seeds are drawn until the seed budget is filled.

\subsection{Complexity and Scaling}\label{sec:scale}

\paragraph{Index construction cost.}
The proximity graph dominates indexing time: $k$-NN computation costs $O(n^2 \cdot d)$ brute-force (reducible with approximate methods), symmetrization is $O(|E|)$, and selective $\alpha$-RNG pruning is $O(n \cdot R_{\max} \log R_{\max})$. The anchor atlas is cheap by comparison: $k$-means runs in $O(n \cdot K \cdot d \cdot T_{\text{iter}})$, and building the per-cluster hash tables, \texttt{members} lists, and inverted index requires a single $O(n \cdot F)$ pass over the metadata, where $F$ is the number of metadata fields.

\paragraph{Space.}

\begin{lemma}[Atlas storage]\label{lem:storage}
Let each of the $n$ points be assigned to exactly one of $K$ clusters,
and let each point carry metadata across $F$ fields. The total storage for \texttt{members}, and \texttt{cluster\_index} is $O(n \cdot F)$, independent of the vocabulary sizes $|V_f|$ and the number of clusters $K$.
\end{lemma}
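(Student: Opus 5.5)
The plan is a charging argument: bound every stored entry by a distinct pair $(x,f)$ with $x \in X$ and $f$ a field, of which there are exactly $nF$. The claim only makes sense if we store nonempty lists and nonempty posting sets, for instance as hash maps keyed by $(c,f,v)$ and $(f,v)$, rather than dense arrays of size $K\cdot\sum_f|V_f|$. I would state this sparse-representation convention explicitly at the start, since it is exactly what makes the bound independent of $|V_f|$ and $K$.

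\emph{Step 1: the \texttt{members} lists.} Fix a field $f$. Each point $x$ lies in exactly one cluster $c(x)$ and has exactly one value $f(x)$. It therefore appears in exactly one list, namely $\texttt{members}[c(x)][f][f(x)]$. Consequently
\[
\sum_{c}\sum_{v}\bigl|\texttt{members}[c][f][v]\bigr| = n .
\]
The number of nonempty keys $(c,v)$ for this $f$ is at most $n$, because each nonempty list contains at least one point. Summing over the $F$ fields, the total number of list entries is $nF$ and the total number of keys is at most $nF$, so this component uses $O(nF)$ space.

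\emph{Step 2: \texttt{cluster\_index}.} For each field $f$, the entry $\texttt{cluster\_index}[f][v]$ is exactly the set of $c$ with $\texttt{members}[c][f][v]\neq\emptyset$. Its entries are therefore in bijection with the nonempty keys counted in Step 1, giving
\[
\sum_{f}\sum_{v}\bigl|\texttt{cluster\_index}[f][v]\bigr| \le nF .
\]
Likewise, the number of nonempty keys $(f,v)$ is at most the number of distinct values actually realized, which is also at most $nF$. Adding Steps 1 and 2 gives total storage $O(nF)$ with no dependence on $K$ or $|V_f|$.

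The main obstacle is not the counting itself. It is pinning down the storage model: a naive dense layout costs $\Theta(K\sum_f|V_f|)$. So the step I would write most carefully is the convention that absent keys cost nothing, with hashing or sorted sparse maps giving $O(1)$ overhead per stored key. Any per-cluster centroid storage ($O(Kd)$) lies outside the two structures named in the statement, and I would note that it is excluded.
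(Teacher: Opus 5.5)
Your proposal is correct and follows essentially the same charging argument as the paper: each pair $(x_i,f)$ contributes exactly one entry to \texttt{members}$[c(x_i)][f][f(x_i)]$ and at most one deduplicated entry to \texttt{cluster\_index}$[f][f(x_i)]$, and only nonzero entries are stored. Your version is slightly more careful, since it also bounds the number of stored keys and states the sparse-storage convention explicitly, whereas the paper only asserts it in its last sentence.
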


\begin{proof}
Each point $x_i$ belongs to one cluster $c(x_i)$ and has one value
$f(x_i)$ per field. For each field $f$, point $x_i$ contributes:
\begin{itemize}[nosep]
  \item one entry in the list
        $\texttt{members}[c(x_i)][f][f(x_i)]$,
  \item at most one insertion of $c(x_i)$ into
        $\texttt{cluster\_index}[f][f(x_i)]$
        (deduplicated).
\end{itemize}
Summing over $n$ points and $F$ fields, each structure contains at
most $n \cdot F$ non-zero entries. Only non-zero entries are stored.
\end{proof}

The graph stores $O(n \cdot \bar{R})$ edges. The anchor atlas stores \texttt{members} lists totaling $O(n \cdot F)$ entries. The inverted index tables is sparse: only non-zero entries are stored, so their size is also $O(n \cdot F)$ (Lemma \ref{lem:storage}). When metadata is sparse (not all fields are populated for every
point), only non-empty field values contribute, and the bound
tightens to $O(\sum_{i=1}^{n} F_i)$ where $F_i$ is the \textit{number of non-empty fields for point} $x_i$.

\paragraph{Scaling the anchor atlas.}
With $K = \lceil\sqrt{n}\rceil$ clusters, the brute-force cosine computation over $C_{\text{match}}$ is negligible for $n \leq 10^6$. When the dataset scales, there will be two viable options

\begin{itemize}
    \item We can build a \emph{hierarchical anchor structure}: two-level $k$-means with $K_1 = n^{1/4}$ super-clusters, each containing $K_2 = n^{1/4}$ sub-clusters. The inverted index operates at both levels: first identify matching super-clusters in $O(|S|)$, then search within them. Each level is small enough for brute-force cosine, avoiding the filtered graph search problem entirely while keeping the total anchor scoring cost at $O(n^{1/4} \cdot d)$ per query
    \item We can also fix $K$ to a moderate constant (e.g., $K \in [100, 1000]$) independent of $n$. The anchor atlas becomes a coarser but cheaper structure, and the graph walk compensates with local exploration. This is justified by two observations. First, the complexity analysis in Section~\ref{sec:algorithms} shows that walk cost dominates query time over anchor scoring, so even a brute-force scan over all $K$ clusters adds negligible overhead when $K$ is bounded. Second, the anchor atlas does not need to identify the \emph{optimal} fiber-present region---it needs to land the walk in a \emph{sufficiently good} region where the fiber exists and the graph can take over.
\end{itemize}

\section{Search Algorithms}\label{sec:algorithms}

Both search strategies share the outer restart loop (Algorithm~2). They differ in how a single walk is executed.

\medskip
\noindent\textbf{Algorithm 2: Outer Search Loop with Anchor Restarts}
\begin{algorithmic}[1]
\REQUIRE Query $q$, filter $S$, result size $k$, jump budget $J$, walk procedure $\func{Walk}$, seed budget $n_s$, cluster budget $C_{\max}$
\ENSURE Approximate $k$-NN of $q$ in $X_S$

\STATE $\text{results} \leftarrow \emptyset$, \quad $\text{processed} \leftarrow \emptyset$
\FOR{$j = 0, 1, \dots, J$}
\STATE \textbf{Anchor selection:}
\STATE \quad Retrieve candidate clusters $C_{\text{match}}$ via inverted index
\STATE \quad Discard clusters in $\text{processed}$
\STATE \quad Rank remaining by similarity score
\STATE \quad $\text{seeds} \leftarrow \emptyset$
\FOR{each cluster $c$ in top-$C_{\max}$ ranked clusters}
    \IF{$|\text{seeds}| \geq n_s$}
        \STATE \textbf{break}
    \ENDIF
    \STATE Sample filter-matching points from $c$ via \texttt{members} intersection
    \STATE Add sampled points to $\text{seeds}$; add $c$ to $\text{processed}$
\ENDFOR
\IF{no seeds found}
\STATE \textbf{break}
\ENDIF

\STATE $\text{walk\_results} \leftarrow \func{Walk}(q, \text{seeds}, S)$
\STATE $\text{results} \leftarrow \text{results} \cup \text{walk\_results}$ \COMMENT{deduplicate, keep best similarity}

\IF{$|\text{results}| \geq k$}
\STATE \textbf{break} \COMMENT{enough results collected}
\ENDIF
\ENDFOR
\RETURN top-$k$ from results by similarity
\end{algorithmic}

\subsection{Walk Strategy 1: Beam Search}\label{sec:beam}

\medskip
\noindent\textbf{Algorithm 3: Beam Search Walk (Passive Filtered Collection)}
\begin{algorithmic}[1]
\REQUIRE Query $q$, seeds, filter conditions $S$, beam width $B$
\ENSURE Filtered results collected along walk

\STATE $\text{candidates} \leftarrow \{(\cos(q, s),\; s) : s \in \text{seeds}\}$, sorted descending
\STATE $\text{candidates} \leftarrow \text{candidates}[{:}B]$ \COMMENT{keep top $B$}
\STATE $\text{expanded} \leftarrow \emptyset$, \quad $\text{results} \leftarrow \emptyset$

\WHILE{$\exists$ unexpanded node in candidates}
    \STATE $x \leftarrow$ best unexpanded node in candidates
    \STATE $\text{expanded} \leftarrow \text{expanded} \cup \{x\}$

    \FOR{each $y \in N(x)$ not yet seen}
        \STATE Compute $\cos(q, y)$
        \STATE Add $({\cos(q,y)},\; y)$ to candidates
        \IF{$y$ matches filter $S$}
            \STATE $\text{results} \leftarrow \text{results} \cup \{(\cos(q,y),\; y)\}$
        \ENDIF
    \ENDFOR

    \STATE Sort candidates descending; keep top $B$ \COMMENT{prune beam}
\ENDWHILE

\RETURN results
\end{algorithmic}

\subsection{Walk Strategy 2: Drift-Guided Search}\label{sec:guided}

\medskip
\noindent\textbf{Algorithm 4: Drift-Guided Walk (Two-Phase Navigation)}
\begin{algorithmic}[1]
\REQUIRE Query $q$, seeds, filter conditions $S$, beam width $B$, frontier width $K_f$, stall budget $T$
\ENSURE Filtered results collected along walk

\STATE Initialize potential cache: $V(s) \leftarrow 1 - \cos(q,s)$ for each seed $s$
\STATE $\text{frontier} \leftarrow \{(V(s), s) : s \in \text{seeds}\}$ sorted ascending \COMMENT{Phase 1 queue}
\STATE $\text{beam} \leftarrow \emptyset$ \COMMENT{Phase 2 queue}
\STATE $\text{phase} \leftarrow 1$, \quad $\text{stall} \leftarrow 0$, \quad $\text{results} \leftarrow \{\}$

\WHILE{expansion budget not exhausted}

    \STATE \textbf{--- Node Selection ---}
    \IF{phase $= 1$}
        \STATE $x \leftarrow$ pop best unexpanded from frontier
        \IF{frontier exhausted}
            \STATE $\text{phase} \leftarrow 2$; seed beam from all seen unexpanded nodes; \textbf{continue}
        \ENDIF
    \ELSE
        \STATE $x \leftarrow$ pop best unexpanded from beam
        \IF{beam exhausted} \STATE \textbf{break} \COMMENT{walk converged} \ENDIF
        \IF{$V(x) > V_{(k)}$} \STATE \textbf{break} \COMMENT{cannot improve top-$k$} \ENDIF
        \IF{$\text{stall} \geq T$} \STATE \textbf{break} \COMMENT{fiber too sparse, defer to restart} \ENDIF
    \ENDIF

    \STATE \textbf{--- Expand $x$ ---}
    \FOR{each unseen $y \in N(x)$}
        \STATE $V(y) \leftarrow 1 - \cos(q,y)$; cache $V(y)$
        \IF{$y$ matches filter} \STATE $\text{results}[y] \leftarrow \cos(q,y)$ \ENDIF
    \ENDFOR

    \STATE \textbf{--- Fiber Diagnostics ---}
    \STATE $N_S(x) \leftarrow \{y \in N(x) : y \text{ matches filter}\}$
    \STATE $\rho_S(x) \leftarrow |N_S(x)| / |N(x)|$
    \STATE $\text{drift}(x) \leftarrow \frac{1}{|N_S(x)|}\sum_{y \in N_S(x)} \bigl(V(y) - V(x)\bigr)$
    \STATE $\text{new\_filtered} \leftarrow$ count of new filter-matching neighbors found
    \STATE Update stall counter: reset if $\text{new\_filtered} > 0$, else increment

    \STATE \textbf{--- Phase Logic ---}
    \IF{phase $= 1$}
        \IF{$\text{drift} < 0$}
            \STATE $\text{candidates} \leftarrow \{y \in N_S(x) : V(y) < V(x),\; y \text{ unexpanded}\}$, sorted by $V$
            \STATE Push top $K_f$ candidates onto frontier
        \ELSE
            \STATE $\text{phase} \leftarrow 2$; seed beam from $N(x) \cup \text{frontier}$; clear frontier
        \ENDIF
    \ELSIF{phase $= 2$}
        \STATE Add unseen neighbors of $x$ to beam; sort and prune to $B$
        \IF{$\text{drift} < 0$ \AND $\text{new\_filtered} > 0$}
            \STATE Rebuild frontier from beam's filtered unexpanded nodes
            \IF{frontier non-empty}
                \STATE $\text{phase} \leftarrow 1$; clear beam
            \ENDIF
        \ENDIF
    \ENDIF
\ENDWHILE

\RETURN results
\end{algorithmic}

\medskip
\paragraph{Phase 1 (Fiber Descent).} When $\text{drift}(x) < 0$, the filtered neighborhood slopes toward $q$. Only filtered, descending neighbors are queued (at most $K_f$ per step). This is cheap: every expansion targets a filter-matching, query-approaching point.

\paragraph{Phase 2 (Full-Graph Beam).} When $\text{drift}(x) \geq 0$, the fiber is flat, uphill, or absent. The walk falls back to standard beam search on the full graph with passive result collection. Three termination conditions: (a) beam converged, (b) best candidate's potential exceeds $k$-th best result $V_{(k)}$, (c) stall budget $T$ exhausted.

\paragraph{Phase Switching.} Phase 2 $\to$ Phase 1 re-entry requires $\text{drift} < 0$ \emph{and} $\text{new\_filtered} > 0$, ensuring the fiber is actively producing results, not merely theoretically present.

\subsection{Complexity} 

Let $R = |N(x)|$ denote the graph degree. Each expansion in either phase computes cosine similarity for unseen neighbors at cost $O(R \cdot d)$, which dominates. The drift computation uses cached potentials and costs $O(|N_S(x)|) \subseteq O(R)$---effectively free. The phases differ in frontier management:
\begin{itemize}
    \item \emph{Phase 1:} frontier insertion costs $O(K_f \log |\text{frontier}|)$ per step. Since the frontier contains only filtered descending neighbors, $|\text{frontier}| \leq K_f \cdot H_1$ where $H_1$ is the number of Phase~1 hops. With $K_f = 5$, this is negligible.
    \item \emph{Phase 2:} beam pruning costs $O(B \log B)$ per step, identical to standard beam search.
\end{itemize}
The total cost per walk is $O\bigl((H_1 + H_2) \cdot R \cdot d + H_2 \cdot B \log B\bigr)$, where $H_1$ and $H_2$ are the hops spent in each phase. When the fiber is dense, $H_1 \gg H_2$ and the walk avoids the $B \log B$ beam overhead entirely. When the fiber is sparse, Phase~2 dominates but the stall budget $T$ bounds $H_2$, ensuring the walk terminates and defers to anchor restarts rather than exhausting the expansion budget.

\paragraph{Anchor scoring cost.} With the inverted cluster index, each restart first retrieves the set of matching clusters $C_{\text{match}}$ in $O(|S|)$ via set intersection, then computes cosine similarity only for those clusters at cost $O(|C_{\text{match}}| \cdot d)$. Across $J$ restarts (with $|C_{\text{match}}|$ shrinking as processed clusters are excluded), the total anchor overhead per query is $O(|C_{\text{match}}| \cdot d)$---independent of the total number of clusters $K$. For selective filters ($|C_{\text{match}}| \ll K$), this is negligible compared to the walk cost.

\paragraph{Total query cost.} Filter membership is checked inline via metadata dict lookup at $O(|S|)$ per node, cached so each node is checked at most once. Over a walk visiting $H \cdot R$ nodes, this adds $O(H \cdot R \cdot |S|)$. Combining with the walk and anchor costs:
\[
    O\Bigl(\underbrace{(H_1 + H_2) \cdot R \cdot d}_{\text{similarity (dominates)}} + \underbrace{(H_1 + H_2) \cdot R \cdot |S|}_{\text{filter checking}} + \underbrace{|C_{\text{match}}| \cdot d}_{\text{anchor scoring}}\Bigr).
\]
Since $|S|$ is usually small compared to $d$, filter checking is negligible relative to similarity computation.

\paragraph{Scaling.} At large scale, the walk cost grows only through the graph degree $R$ and the number of hops needed for convergence---both are independent of $n$ for fixed graph construction parameters. The anchor scoring cost depends on $|C_{\text{match}}|$, which grows sublinearly with
$K = \sqrt{n}$ since selective filters match a diminishing fraction of clusters. Two strategies keep anchor scoring negligible: a hierarchical anchor structure (Section~\ref{sec:scale}) that reduces scoring to $O(n^{1/4} \cdot d)$ per query, or fixing $K$ to a moderate constant and relying on the walk to compensate for coarser cluster placement. In either case, the overall query cost remains dominated by the walk.

\section{Experimental Setup}\label{sec:experiments}

\paragraph{Dataset.} H\&M product embeddings: $n = 105{,}100$ vectors in $d = 2048$ dimensions with 24 categorical metadata fields (product group, colour, department, section, garment group, etc.) \cite{HM}. Ground truth: 10{,}000 test queries with conjunctive match filters and precomputed nearest filtered neighbors under cosine similarity. Filter selectivity ranges from ${<}\,0.01\%$ (e.g., ``Chem.\ cosmetics'') to ${\sim}21\%$ (e.g., ``Dusty Light'').

\paragraph{Our configurations.} $\alpha$-$k$NN graph with $k{=}64$ (mean degree ${\sim}128$),
$K{=}324$ clusters, jump budget $J{=}3$, seeds drawn from up to 5 clusters per restart ($C_{\text{max}} = 5$), and $n_s = 10$.
Beam search: $B{=}40$. Guided search: $B{=}2$ (Phase~2), $K_f{=}5$, stall budget $T{=}100$, max hops per walk 100. 
Source code is available at \cite{dang2026fibered_ann}.

\paragraph{Graph statistics.} The two graph variants exhibit different structural profiles on this dataset:

\begin{center}
\begin{tabular}{@{}lcccccc@{}}
\toprule
Graph & Total edges & Mean deg. & Min deg. & Max deg. & Memory \\
\midrule
$\alpha$-$k$NN & 8.7\,M & 82.8 & 19 & 128 & 34.8\,MB \\
HNSW base & 2.2\,M & 20.8 & 1 & 126 & 8.7\,MB \\
\bottomrule
\end{tabular}
\end{center}

The base layer HNSW is taken from FAISS with $M = 64$. From the table, the $\alpha$-$k$NN graph has $4\times$ the edges and $4\times$ the mean degree of the HNSW base graph. Higher connectivity benefits passive filtered collection: each expansion exposes more neighbors, increasing the probability of encountering a filter-matching point. The HNSW graph, by contrast, is aggressively pruned for efficient unfiltered greedy descent---its minimum degree of 1 means some nodes have a single edge, creating potential dead ends for filtered navigation. Despite this disadvantage, guided search on the HNSW graph still achieves high recall, confirming that the anchor atlas and drift-guided phase switching compensate for sparse graph structure.

\paragraph{HNSW baselines.} We use FAISS’s HNSW implementation as a baseline \cite{johnson2019billion} with $M{=}64$ (base-layer degree ${\sim}128$, matching our graph) and \texttt{efSearch}${=}400$:
\begin{itemize}
    \item \textbf{Post-filter:} retrieve $k {\times} 20$ unfiltered results, discard non-matching points.
    \item \textbf{Traversal-filter:} traversal-time filtering via \texttt{IDSelectorBatch} in \texttt{SearchParametersHNSW}. FAISS evaluates \texttt{sel->is\_member()} during candidate processing, collecting only filter-matching points as results while navigating the full graph.
\end{itemize}

\section{Results}\label{sec:results}

The table below compares our results with FAISS’s HNSW baseline methods for filtered search. Note that beam search uses the single-phase walk strategy described in
Section~\ref{sec:beam}, whereas guided search uses the drift-guided two-phase algorithm presented in Section~\ref{sec:guided}.

\begin{table}[H]
\centering
\caption{Recall@25 on 10{,}000 queries. Beam search uses $B{=}40$. Guided search uses $B{=}2$. HNSW uses \texttt{efSearch}${=}400$. Latency is measured on Apple M1. Our implementation uses Python with NumPy; FAISS uses optimized C++.}

\begin{tabular}{@{}lccccc@{}}

\toprule

Method & Recall@25 & $\geq 0.8$ & $= 1.0$ & Zero recall & Latency \\

\midrule

FAISS HNSW post-filter  & 0.313 & 24.1\% & 21.2\% & $44.8\%$ & 3.7\,ms \\

FAISS HNSW traversal-filter   & 0.489 & 38.8\% & \textbf{27.1}\% & $25.1\%$ & 3.4\,ms \\

Beam search, HNSW base ($B{=}40$) & 0.652 & 33.4\% & 21.2\% & 0.07\% & 9.6\,ms \\

Guided search, HNSW base ($B{=}2$) & 0.680 & 37.2\% & 10.5\% & 0.06\% & 19.0\,ms \\

Beam search, $\alpha$-$k$NN base ($B{=}40$) & 0.717 & 45.1\% & 21.7\% & 0.09\% & 14.3\,ms \\

Guided search, $\alpha$-$k$NN base ($B{=}2$) & \textbf{0.781} & \textbf{60.1\%} & 20.5\% & \textbf{0.05}\% & 31.9\,ms \\

\bottomrule

\end{tabular}
\end{table}

\begin{table}[H]
\centering
\caption{Walk statistics ($k{=}25$, 10{,}000 queries).}
\label{tab:walk-stats}
\begin{tabular}{@{}lcc@{}}
\toprule
& \textbf{Guided search $\alpha$-$k$NN ($B{=}2$)} & \textbf{Beam search $\alpha$-$k$NN ($B{=}40$)} \\
\midrule
Mean walks & 1.72 & 1.97 \\
Resolved in 1 walk & 60.6\% & 45.3\% \\
Mean hops & 151.4 & 84.4 \\
\midrule
\multicolumn{3}{@{}l}{\emph{Recall progression (queries needing $\geq j$ walks):}} \\
After walk 1 & 0.648 & 0.550 \\
After walk 2 & 0.543 & 0.508 \\
After walk 3 & 0.672 & 0.643 \\
After walk 4 & 0.804 & 0.795 \\
\bottomrule
\end{tabular}
\end{table}

Note that walk 1 recall is high because it includes the 60.6\% of queries that resolve in a single walk. Those might be easy queries with favorable fiber geometry. Queries reaching walk 2 and beyond are progressively harder, so the conditional recall dips before the restart mechanism recovers it.

\paragraph{Key observations.}
\begin{enumerate}
    \item Guided search with a $20\times$ smaller beam (B=2 vs.\ B=40) outperforms standard beam search on $\alpha$-$k$NN graph and on HNSW base graph, confirming that drift-directed fiber descent is more efficient than blind beam exploration for filtered collection.
    \item All of our methods outperform traditional HNSW recall. The anchor atlas bypasses the entry-point bottleneck that causes traditional HNSW to miss geometrically distant fibers.
    \item Half of the queries resolve in a single walk. For queries that require multiple walks, the anchor restart mechanism progressively recovers recall: the hardest queries (those reaching walk~4) recover from $\sim$0.54 after walk~2 to $\sim$0.80 after walk~4 (Table~\ref{tab:walk-stats}). 
\end{enumerate}

\paragraph{Why HNSW fails on selective filters.} When filtered nearest neighbors are geometrically distant from unfiltered nearest neighbors, both strategies degrade:
\begin{itemize}
    \item \emph{Post-filter:} the top-$N$ unfiltered results contain few or zero matching points. Example: a query whose overall nearest neighbors are garments ($\cos \approx 0.90$) but whose nearest shoes are at $\cos \approx 0.22$---no shoes appear in the top 500.
    \item \emph{Traversal-filter:} the HNSW walk starts from a hierarchical entry point and converges in the high-similarity region. With traversal-time filtering, it cannot bridge the graph distance to the distant fiber-present region---a \emph{topological cut}, exactly as predicted by the geometric framework.
\end{itemize}
Our approach bypasses this bottleneck: the anchor atlas seeds the walk directly in a fiber-present region near $q$, avoiding the entry-point funnel entirely.

\section{Stall Regime Analysis}\label{sec:analysis}

The geometric framework predicts three distinct failure modes for filtered graph search,
each characterized by local signals at the stall point. This section validates
those predictions empirically by classifying every walk termination across 1{,}000 queries
and examining how the distribution of stall regimes varies with filter selectivity.

\subsection{Stall Regimes}\label{sec:stall}

A \emph{stall point} $x^*$ is the last node expanded by a walk before termination---the node whose neighborhood was fully explored but failed to yield sufficient progress. Note that by the time the walk reaches $x^*$, it has already exited Phase~1 (which requires negative drift) and entered Phase~2. This means the drift at $x^*$ is non-negative, and it is unlikely that descent toward $q$ via filtered points remains available. The diagnostic question is: \textit{does the full
graph know a better direction that the fiber cannot access?} To answer this, we introduce the \emph{boundary-improving set} $B^-(x^*) = \{y \in N(x^*) \setminus X_S \mid V(y) < V(x^*)\}$, which contains neighbors outside the filter that would improve the potential.

Three regimes are possible. If the fiber is locally absent ($\rho_S$ low), the walk exhausted its budget without finding filtered neighbors---a \emph{topological cut} (Figure \ref{fig:top-cut})). If the fiber is present but slopes away from $q$ ($\rho_S$ high, $|B^-| > 0$), the descent directions pass through unfiltered territory---a \emph{geometric fold} (Figure \ref{fig:geometry-fold}). If no neighbor, filtered or otherwise, offers improvement ($\rho_S$ high, $|B^-| = 0$), the walk has reached a true local minimum---a \emph{genuine basin} (Figure \ref{fig:basin}).

\begin{center}
\begin{tabular}{lll}
\toprule
Condition & Interpretation & Regime \\
\midrule
$\rho_S(x^*)$ low & Fiber locally disconnected & Topological Cut \\
$\rho_S(x^*)$ high, $|B^-(x^*)| > 0$ & Fiber present, geometry misaligned & Geometric Fold \\
$\rho_S(x^*)$ high, $|B^-(x^*)| = 0$ & Genuine local basin & Genuine Basin \\
\bottomrule
\end{tabular}
\end{center}

\begin{figure}[H]
\centering
\begin{subfigure}[b]{0.3\textwidth}
    \centering
    \includegraphics[width=\textwidth]{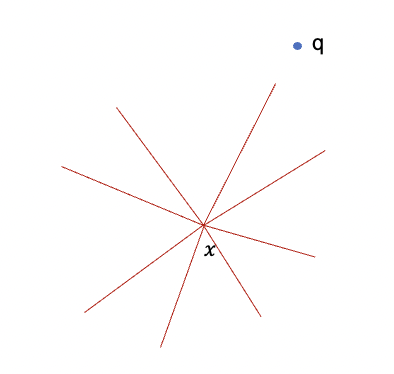}
    \caption{Topological cut}
    \label{fig:top-cut}
\end{subfigure}
\begin{subfigure}[b]{0.3\textwidth}
    \centering
    \includegraphics[width=\textwidth]{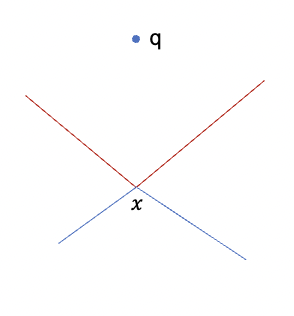}
    \caption{Geometric fold}
    \label{fig:geometry-fold}
\end{subfigure}
\begin{subfigure}[b]{0.3\textwidth}
    \centering
    \includegraphics[width=\textwidth]{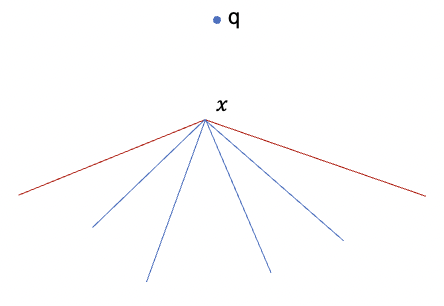}
    \caption{Genuine basin}
    \label{fig:basin}
\end{subfigure}
\caption{Stall regimes.}
\label{fig:fibers}
\end{figure}

\paragraph{All Failures Reduce to Restarting.}\label{sec:restart-insight}

The three stall regimes share a common resolution: the walk needs a better starting point in
a region where the fiber is present near $q$. Topological cuts need a region where filtered
points exist. Geometric folds need to approach $q$ from a different direction. Genuine basins
need a different basin of attraction. In all cases, the remedy is: \emph{jump to a new region
with metadata present near $q$ and restart.}

\subsection{Methodology}

A \textit{stall point} is classified as a \textit{topological cut} when $\rho_S(x^*) < \sigma / 2$,
where $\sigma = |X_S| / n$ is the global filter selectivity. This threshold
identifies nodes where the fiber is locally depleted relative to the global
baseline: if fewer than half the expected fraction of neighbors match the
filter, the fiber is considered locally disconnected. When
$\rho_S(x^*) \geq \sigma / 2$ and $|B^-(x^*)| > 0$, the stall is classified as a
\textit{geometric fold}; when $\rho_S(x^*) \geq \sigma / 2$ and $|B^-(x^*)| = 0$, it is
classified as a \textit{genuine basin}.

For this analysis, we increase the maximum hops per walk from 100 to 500  to allow the stall budget ($T{=}100$) to trigger independently of the hop limit and set $B=4$ to have slightly more exploration. All other parameters remain as in Section~\ref{sec:experiments}.
At each walk termination, we record the stall point $x^*$: the last expanded node before
the walk stopped. At $x^*$ we compute fiber density $\rho_S(x^*)$, drift, the
boundary-improving set $B^-(x^*)$, the potential $V(x^*)$, and the termination reason.
Each stall is classified into one of three regimes following the criteria in
Section~\ref{sec:stall}: topological cut ($\rho_S$ low), geometric fold ($\rho_S$ high,
$|B^-| > 0$), or genuine basin ($\rho_S$ high, $|B^-| = 0$). Queries are binned by
filter selectivity $|X_S|/n$ into five ranges from ${<}\,0.1\%$ to ${>}\,20\%$.

\subsection{Results by Selectivity Bin}

Table~\ref{tab:selectivity-bins} reports search performance and stall regime distribution
across selectivity bins.

\begin{table}[H]
\centering
\caption{Search performance and stall regime distribution by filter selectivity
(guided search, $B{=}4$, $k{=}25$, $\alpha$-$k$NN graph, 1000 queries).}
\label{tab:selectivity-bins}
\begin{tabular}{@{}lrrrrrrr@{}}
\toprule
\textbf{Selectivity} & \textbf{N} & \textbf{Recall} & \textbf{Hops} &
\textbf{Walks} & \textbf{TopCut} & \textbf{Fold} & \textbf{Basin} \\
\midrule
${<}\,0.1\%$  & 184 & 0.884 & 371.9 & 2.38 & 98.4\% &  1.6\% &  0.0\% \\
$0.1$--$1\%$  & 232 & 0.733 & 396.6 & 1.93 & 92.6\% &  7.4\% &  0.0\% \\
$1$--$5\%$    & 299 & 0.837 & 246.5 & 1.30 & 70.4\% & 29.1\% &  0.5\% \\
$5$--$20\%$   & 189 & 0.914 & 103.3 & 1.03 & 47.7\% & 48.7\% &  3.6\% \\
${>}\,20\%$   &  96 & 0.959 &  37.9 & 1.01 & 25.8\% & 53.6\% & 20.6\% \\
\bottomrule
\end{tabular}
\end{table}

The regime distribution shifts systematically with selectivity, confirming the framework's
predictions. At the rarest filters (${<}\,0.1\%$), nearly all stalls (98.4\%) are
topological cuts: the fiber is so sparse that the walk simply cannot find filtered neighbors
locally. As selectivity increases, topological cuts give way to geometric folds. At
${>}\,20\%$ selectivity, folds dominate (53.6\%) and genuine basins appear (20.6\%).

Notably, the lowest recall occurs not at the rarest filters but in the $0.1$--$1\%$ bin
(0.733 vs.\ 0.884 for ${<}\,0.1\%$). This is because recall is measured as a fraction of
the ground truth set. At extreme sparsity, the ground truth set itself is very small
(often fewer than $k{=}25$ matching points in the entire dataset), so finding even a modest
number of filtered points yields high fractional recall. The $0.1$--$1\%$ bin is the
hardest regime in absolute terms: the fiber is sparse enough to cause topological cuts
(92.6\%) but the ground truth set is large enough that missing points significantly
reduces recall.

The underlying mechanism is that geometric folds \emph{require the fiber to be present} in
order to be misaligned. When the fiber is absent ($\rho_S \approx 0$), the only possible
failure is topological: there is no fiber structure in the local neighborhood to be geometrically unfavorable.
As the fiber becomes denser, it exists at the stall point but may slope away from $q$---a
fold. At high density, the fiber is both present and well-connected, so stalls reflect
genuine local minima of the potential on the fiber rather than structural damage.

\subsection{Termination Reasons}

Table~\ref{tab:termination} shows how termination reasons shift with selectivity.

\begin{table}[H]
\centering
\caption{Walk termination reason distribution by filter selectivity.}
\label{tab:termination}
\begin{tabular}{@{}lrrr@{}}
\toprule
\textbf{Selectivity} & \textbf{Early stop} & \textbf{Stall budget} & \textbf{Max hops} \\
\midrule
${<}\,0.1\%$  &  0.2\% & 96.1\% & 3.7\% \\
$0.1$--$1\%$  & 17.2\% & 72.1\% & 10.5\% \\
$1$--$5\%$    & 54.4\% & 37.1\% & 8.2\% \\
$5$--$20\%$   & 90.8\% &  5.1\% & 4.1\% \\
${>}\,20\%$   & 97.9\% &  2.1\% & 0.0\% \\
\bottomrule
\end{tabular}
\end{table}

The pattern is consistent with the regime classification and now reveals three distinct
termination behaviors. At rare filters (${<}\,0.1\%$), the stall budget dominates (96.1\%):
the walk explores the full graph productively but encounters no filtered neighbors for 100 consecutive expansions. As selectivity increases, early stopping takes over: the walk collects enough high-quality filtered results that the $k$-th best potential gates further exploration, indicating healthy fiber descent.
Max hops termination is rare across all bins, confirming that the stall budget effectively
identifies fiber-sparse regions and triggers restarts before the expansion budget is exhausted.

\subsection{Stall Point Diagnostics}

Table~\ref{tab:stall-details} reports mean diagnostic values at the stall point $x^*$, aggregated by regime.

\begin{table}[H]
\centering
\caption{Mean diagnostic values at stall point $x^*$ by regime.}
\label{tab:stall-details}
\begin{tabular}{@{}lrrrrrr@{}}
\toprule
\textbf{Regime} & \textbf{Count} & $\boldsymbol{\rho_S}$ &
$\boldsymbol{|B^-|}$ & \textbf{Drift} & $\boldsymbol{V(x^*)}$ &
\textbf{Recall} \\
\midrule
Topological cut  & 1237 & 0.0024 & 20.5 & 0.0890 & 0.4046 & 0.762 \\
Geometric fold   &  300 & 0.1612 & 13.8 & 0.0716 & 0.3731 & 0.945 \\
Genuine basin    &   29 & 0.6689 &  0.0 & 0.1078 & 0.2407 & 0.992 \\
\bottomrule
\end{tabular}
\end{table}

The three regimes separate cleanly along every diagnostic axis:

\paragraph{Fiber density.}
Topological cuts occur at $\rho_S = 0.0024$---essentially zero. The fiber does not exist
locally. Geometric folds occur at $\rho_S = 0.161$: the fiber is present and navigable,
but misaligned. Genuine basins occur at $\rho_S = 0.669$: the fiber is dense and
well-connected.

\paragraph{Boundary-improving set.}
Topological cuts have the largest $|B^-|$ (20.5): many unfiltered neighbors would improve
the potential, confirming that the full graph has descent directions that the fiber cannot
access. Geometric folds also have positive $|B^-|$ (13.8): improving directions exist
outside the fiber, but unlike topological cuts, the fiber itself is present---it simply
does not align with those directions. Genuine basins have $|B^-| = 0$ exactly: no
neighbor, filtered or unfiltered, offers improvement. The walk has reached a true local
minimum.

\paragraph{Potential at stall.}
Topological cuts stall farthest from $q$ ($V = 0.405$): the walk cannot approach because
the fiber is absent in the region near $q$. Geometric folds stall closer ($V = 0.373$):
the walk approaches $q$ on the full graph but the fiber neighborhood does not follow.
Genuine basins stall closest ($V = 0.241$): the walk converges to the actual neighborhood
of $q$ on the fiber.

\paragraph{Recall.}
The recall progression across regimes ($0.762 \to 0.945 \to 0.992$) reflects
the severity of each failure mode. Topological cuts are the hardest to recover from
because the fiber must be found in a completely different region; anchor restarts address
this but cannot always recover full recall within the jump budget. Geometric folds are
milder: the fiber exists nearby, and a restart from a different direction typically
resolves the misalignment. Genuine basins require minimal recovery---the walk has
already found most of the nearest filtered neighbors.

\section{Conclusion}\label{sec:conclusion}

We presented a geometric framework for filtered approximate nearest neighbor search. Two local signals---fiber density, and drift---drive a two-phase search algorithm that begins with descent on filtered neighbors and falls back to full-graph exploration when the local geometry is unfavorable. When the walk stalls, a lightweight anchor atlas restarts the search in a fiber-present region near the query. These same signals classify search failures into three
regimes: topological cuts, geometric folds, and genuine basins. The empirical analysis confirmed that the three regimes separate cleanly and shift predictably with filter selectivity: rare filters fail topologically, common filters fail geometrically. On a real-world dataset with filter selectivities spanning three orders of magnitude, the method outperforms FAISS HNSW under both post-filtering and traversal-time filtering, with near-zero failure rate.

\paragraph{Limitations.}
Our experiments use only one real-world dataset (105{,}100 vectors, 24 categorical metadata fields).
Validation on larger-scale datasets and different domains would strengthen the generality claim. 
The implementation is in Python; a compiled implementation would be needed to evaluate latency competitively against production systems. 

\paragraph{Future directions.}
The geometric framework is not tied to any specific base graph, though our results suggest a preference for graphs with dense local structure---characterizing which graph families best support fiber descent is an open question. 
The anchor atlas currently handles conjunctive filters with single-value equality constraints; extending to set-valued predicates within a field is mechanically straightforward (Section \ref{sec:anchor}) but has not been evaluated. 
Range predicates would require additional per-cluster structures such as histograms or min/max bounds. The hierarchical anchor structure described in Section~\ref{sec:scale} offers a path for scaling to larger datasets but requires empirical validation. 
The stall classification threshold between low and high $\rho_S$ is currently heuristic; a principled, data-dependent threshold---perhaps derived from the global selectivity and graph degree---would make the diagnostic framework more robust.
Finally, we plan to benchmark against recent filtered search methods such as Filtered-DiskANN and ACORN on standard evaluation datasets.

\bibliographystyle{plain}
\bibliography{references}

@software{HM,
  author = {Qdrant},
  title = {ANN Filtered Retrieval Datasets},
  year = {2022},
  note = {\url{https://github.com/qdrant/ann-filtering-benchmark-datasets}},
}

@software{dang2026fibered_ann,
  author = {Dang, Thuong},
  title = {Fiber-Navigable Search: Source Code},
  year = {2026},
  note = {\url{https://github.com/thuongtuandang/fibered_ann}},
}

@book{lee2013smooth,
  title     = {Introduction to Smooth Manifolds},
  author    = {Lee, John M.},
  publisher = {Springer},
  year      = {2013},
  edition   = {2nd},
  series    = {Graduate Texts in Mathematics},
  volume    = {218},
  doi       = {10.1007/978-1-4419-9982-5}
}

@article{aitoamar2025rwalks,
  title   = {RWalks: Random Walks as Attribute Diffusers for Filtered Vector Search},
  author  = {Ait Aomar, Anas and Echihabi, Karima and Arnaboldi, Marco and Alagiannis, Ioannis and Hilloulin, Damien and Cherkaoui, Manal},
  journal = {Proceedings of the ACM on Management of Data},
  year    = {2025},
  doi     = {10.1145/3725349}
}

@article{cai2024ung,
  title   = {Navigating Labels and Vectors: A Unified Approach to Filtered Approximate Nearest Neighbor Search},
  author  = {Cai, Yuzheng and Shi, Jiayang and Chen, Yizhuo and Zheng, Weiguo},
  journal = {Proceedings of the ACM on Management of Data},
  year    = {2024},
  doi     = {10.1145/3698822}
}

@inproceedings{gollapudi2023filtereddiskann,
  title     = {Filtered-DiskANN: Graph Algorithms for Approximate Nearest Neighbor Search with Filters},
  author    = {Gollapudi, Siddharth and Karia, Neel and Sivashankar, Varun and Krishnaswamy, Ravishankar and Begwani, Nikit and Raz, Swapnil and Lin, Yiyong and Zhang, Yin and Mahapatro, Neelam and Srinivasan, Premkumar and Singh, Amit and Simhadri, Harsha Vardhan},
  booktitle = {Proceedings of the ACM Web Conference (WWW)},
  year      = {2023},
  doi       = {10.1145/3543507.3583552}
}

@article{iff2025benchmark,
  title   = {Benchmarking Filtered Approximate Nearest Neighbor Search Algorithms on Transformer-based Embedding Vectors},
  author  = {Iff, Patrick and Bruegger, Paul and Chrapek, Marcin and Kochergin, David and Besta, Maciej and Hoefler, Torsten},
  journal = {arXiv preprint arXiv:2507.21989},
  year    = {2025},
  url     = {https://arxiv.org/abs/2507.21989}
}

@article{johnson2019billion,
  title   = {Billion-scale similarity search with {GPUs}},
  author  = {Johnson, Jeff and Douze, Matthijs and J{\'e}gou, Herv{\'e}},
  journal = {IEEE Transactions on Big Data},
  year    = {2019},
  volume  = {7},
  number  = {3},
  pages   = {535--547}
}

@article{lin2025survey,
  title   = {Survey of Filtered Approximate Nearest Neighbor Search over the Vector-Scalar Hybrid Data},
  author  = {Lin, Yanjun and Zhang, Kai and He, Zhenying and Jing, Yinan and Wang, X. Sean},
  journal = {arXiv preprint arXiv:2505.06501},
  year    = {2025},
  url     = {https://arxiv.org/abs/2505.06501}
}

@article{malkov2020hnsw,
  title   = {Efficient and Robust Approximate Nearest Neighbor Search Using Hierarchical Navigable Small World Graphs},
  author  = {Malkov, Yu. A. and Yashunin, D. A.},
  journal = {IEEE Transactions on Pattern Analysis and Machine Intelligence},
  year    = {2020},
  volume  = {42},
  number  = {4},
  pages   = {824--836}
}

@article{patel2024acorn,
  title   = {ACORN: Performant and Predicate-Agnostic Search Over Vector Embeddings and Structured Data},
  author  = {Patel, Liana and Kraft, Peter and Guestrin, Carlos and Zaharia, Matei},
  journal = {Proceedings of the ACM on Management of Data},
  year    = {2024},
  volume  = {2},
  pages   = {1--27},
  doi     = {10.1145/3654923}
}

@inproceedings{subramanya2019diskann,
  title     = {DiskANN: Fast Accurate Billion-Point Nearest Neighbor Search on a Single Node},
  author    = {Subramanya, Suhas Jayaram and Devvrit, Fnu and Simhadri, Harsha Vardhan and Krishnaswamy, Ravishankar and Kadekodi, Rohan},
  booktitle = {Advances in Neural Information Processing Systems (NeurIPS)},
  year      = {2019}
}

@inproceedings{wang2023nhq,
  title     = {An Efficient and Robust Framework for Approximate Nearest Neighbor Search with Attribute Constraint},
  author    = {Wang, Mengzhao and Lv, Lingwei and Xu, Xiaoliang and Wang, Yuxiang and Yue, Qiang and Ni, Jiongkang},
  booktitle = {Advances in Neural Information Processing Systems (NeurIPS)},
  year      = {2023}
}

@inproceedings{wu2022hqann,
  title     = {HQANN: Efficient and Robust Similarity Search for Hybrid Queries with Structured and Unstructured Constraints},
  author    = {Wu, Wei and He, Junlin and Qiao, Yu and Fu, Guoheng and Liu, Li and Yu, Jin},
  booktitle = {Proceedings of the 31st ACM International Conference on Information and Knowledge Management (CIKM)},
  year      = {2022},
  doi       = {10.1145/3511808.3557610}
}

@article{wang2025wow,
  title   = {WoW: A Window-to-Window Incremental Index for Range-Filtering Approximate Nearest Neighbor Search},
  author  = {Wang, Ziqi and Zhang, Jingzhe and Hu, Wei},
  journal = {Proceedings of the ACM on Management of Data},
  year    = {2025},
  doi     = {10.1145/3769843}
}

@article{zhao2022airship,
  title   = {Constrained Approximate Similarity Search on Proximity Graph},
  author  = {Zhao, Weijie and Tan, Shulong and Li, Ping},
  journal = {arXiv preprint arXiv:2210.14958},
  year    = {2022},
  url     = {https://arxiv.org/abs/2210.14958}
}

@article{zuo2024serf,
  title   = {SeRF: Segment Graph for Range-Filtering Approximate Nearest Neighbor Search},
  author  = {Zuo, Chaoji and Qiao, Miao and Zhou, Wenchao and Li, Feifei and Deng, Dong},
  journal = {Proceedings of the ACM on Management of Data},
  year    = {2024},
  doi     = {10.1145/3639324}
}

\end{document}